\documentclass[conference]{IEEEtran}
\IEEEoverridecommandlockouts

\usepackage[T1]{fontenc}
\usepackage[utf8]{inputenc}
\usepackage{csquotes}

\usepackage{amsfonts}       
\usepackage{amsmath}
\usepackage{amsthm}%
\usepackage{bbm}
\usepackage{comment}
\usepackage{siunitx}
\usepackage{mathtools}
\usepackage{pifont}
\usepackage{nicefrac}       
\usepackage{xpatch}

\usepackage{aligned-overset}
\usepackage{graphicx, import}
\usepackage{tikz}
\usepackage{pgfplots}
\pgfplotsset{compat=1.18}
\usepackage[compat=1.1.0]{tikz-feynman}
\tikzfeynmanset{warn luatex=false} 
\usetikzlibrary{quantikz2}

\usepackage{tabularray}
\UseTblrLibrary{booktabs}

\usepackage{parskip}
\usepackage{caption}
\usepackage{subcaption} 
 
\usepackage{xcolor}
\usepackage{orcidlink}
\usepackage{soul}
\usepackage{hyperref}
\usepackage{url}            
\hypersetup{colorlinks=true,urlcolor=teal,
            linkcolor=teal,citecolor=teal} 
\usepackage{glossaries-extra}

\usepackage[export]{adjustbox}
\usepackage[inline]{enumitem}
\usepackage[noabbrev,capitalize]{cleveref}
\usepackage{xspace}
\usepackage{censor}

\usepackage{microtype}      

\usepackage{stmaryrd} 
\SetSymbolFont{stmry}{bold}{U}{stmry}{m}{n} 

\usepackage{silence}
\newboolean{anonymous}
\setboolean{anonymous}{false}

\usepackage[
  backend=biber,
  maxcitenames=2,
  minbibnames=3,
  mincitenames=1,
  maxbibnames=3,
  uniquelist=false,
  url=false,
  doi=true,
  eprint=false,
  isbn=false,
]{biblatex}
\setabbreviationstyle[acronym]{long-short}
\glssetcategoryattribute{acronym}{nohyperfirst}{true}

\newacronym{fcc}{FCC}{Fourier coefficient correlation}
\newacronym{fft}{FFT}{fast Fourier transform}
\newacronym{fm}{FM}{feature map}
\newacronym{ftqc}{FTQC}{fault‑tolerant quantum computing}
\newacronym{hea}{HEA}{hardware‑efficient ansatz}
\newacronym{hep}{HEP}{high‑energy physics}
\newacronym{kl}{KL}{Kullback‑Leibler}
\newacronym{lhc}{LHC}{Large Hadron Collider}
\newacronym{lo}{LO}{leading order}
\newacronym{ml}{ML}{machine learning}
\newacronym{mlp}{MLP}{multi‑layer perceptron}
\newacronym{mse}{MSE}{mean squared error}
\newacronym{nisq}{NISQ}{noisy intermediate‑scale quantum}
\newacronym{nlo}{NLO}{next‑to‑leading order}
\newacronym{pca}{PCA}{principal component analysis}
\newacronym{pcc}{PCC}{pearson correlation coefficient}
\newacronym{pqc}{PQC}{parameterised quantum circuit}
\newacronym{qaoa}{QAOA}{quantum approximate optimisation algorithm}
\newacronym{qcd}{QCD}{quantum chromodynamics}
\newacronym{qfm}{QFM}{quantum Fourier model}
\newacronym{qk}{QK}{quantum kernel}
\newacronym{qml}{QML}{quantum machine learning}
\newacronym{qnn}{QNN}{quantum neural network}
\newacronym{rff}{RFF}{random Fourier features}
\newacronym{spam}{SPAM}{state preparation and measurement}
\newacronym{qoc}{QOC}{quantum optimal control}
\newacronym{bp}{BP}{Barren Plateau}
\newacronym{psr}{PSR}{parameter-shift rule}
\newacronym{rwa}{RWA}{rotating-wave approximation}

\newcommand{\melvin}[1]{\todo[color=cyan, inline]{Melvin: #1}}

\newcommand{\wolfgang}[1]{\todo[color=purple, inline]{Wolfgang: #1}}

\makeatletter
\newcommand{\linebreakand}{%
    \end{@IEEEauthorhalign}
    \hfill\mbox{}\par
    \mbox{}\hfill\begin{@IEEEauthorhalign}
}
\makeatother

\newtheorem{theorem}{Theorem}
\newtheorem{proof}{Proof}
\xpretocmd{\endproof}{\raggedright\qed}{}{}

\newtheorem{example}{Example}%
\newtheorem{remark}{Remark}%

\newcommand{\refapx}[1]{\hyperref[#1]{Apx.~\ref*{#1}}}

\newcommand{\eg}{\emph{e.g.}\xspace}
\newcommand{\wrt}{\emph{w.r.t.}\xspace}
\newcommand{\ie}{\emph{i.e.}\xspace}
\newcommand{\cf}{\emph{cf.}\xspace}

\newcommand{\ase}{ansätze\xspace}

\newcommand{\as}{ansatz\xspace}

\newcommand{\btheta}{\ensuremath \boldsymbol{\theta}}
\newcommand{\bomega}{\ensuremath \boldsymbol{\omega}}
\newcommand{\br}{\ensuremath \boldsymbol{r}}
\newcommand{\bp}{\ensuremath \boldsymbol{p}}
\newcommand{\blambda}{\ensuremath \boldsymbol{\lambda}}
\newcommand{\bphi}{\ensuremath \boldsymbol{\phi}}

\newcommand{\bOmega}{\ensuremath \boldsymbol{\Omega}}
\newcommand{\bPi}{\ensuremath \boldsymbol{\Pi}}

\newcommand{\bx}{\ensuremath \boldsymbol{x}}
\newcommand{\bc}{\ensuremath \boldsymbol{c}}

\newcommand{\fullspec}{\ensuremath \tilde{\bOmega}}
\newcommand{\uniquespec}{\ensuremath \bOmega}
\newcommand{\fcc}{\ensuremath \mathtt{FCC}_{\Theta}}

\newcommand{\loss}{\ensuremath \mathcal{L}}

\newcommand{\imag}{\mathrm{i}}
\newcommand{\dt}{\ensuremath \mathrm{d}t}

\newcommand{\X}{\ensuremath{\hat{\mathrm{X}}}}
\newcommand{\Y}{\ensuremath{\hat{\mathrm{Y}}}}
\newcommand{\Z}{\ensuremath{\hat{\mathrm{Z}}}}
\newcommand{\R}{\ensuremath{\hat{\mathrm{R}}}}
\newcommand{\Rot}{\ensuremath{\mathrm{Rot}}}
\newcommand{\RX}{\ensuremath{\hat{\mathrm{R}}_{X}}}
\newcommand{\RY}{\ensuremath{\hat{\mathrm{R}}_{Y}}}
\newcommand{\RZ}{\ensuremath{\hat{\mathrm{R}}_{Z}}}
\newcommand{\CX}{\ensuremath{\mathrm{C}\X}}
\newcommand{\CY}{\ensuremath{\mathrm{C}\Y}}
\newcommand{\CZ}{\ensuremath{\mathrm{C}\Z}}
\newcommand{\CRX}{\ensuremath{\mathrm{C}\RX}}
\newcommand{\CRY}{\ensuremath{\mathrm{C}\RY}}
\newcommand{\CRZ}{\ensuremath{\mathrm{C}\RZ}}
\newcommand{\hadamard}{\ensuremath{\hat{\mathrm{H}}}}
\newcommand{\hamiltonian}{\ensuremath{\hat{H}}}
\newcommand{\tildeH}{\ensuremath{\tilde{H}}}
\newcommand{\dysontime}{\mathcal{T}}
\newcommand{\manifold}{\mathcal{M}}

\definecolor{color1}{HTML}{009682}
\definecolor{color2}{HTML}{DF9B1B}
\definecolor{color3}{HTML}{1f78b4}
\definecolor{color4}{HTML}{002D4C}

\definecolor{lfd1}{HTML}{000000}
\definecolor{lfd2}{HTML}{E69F00}
\definecolor{lfd3}{HTML}{999999}
\definecolor{lfd4}{HTML}{009371}
\definecolor{lfd5}{HTML}{BEAED4}
\definecolor{lfd6}{HTML}{ED665A}
\definecolor{lfd7}{HTML}{1F78B4}

\ifbool{anonymous}{}{}
\ifbool{anonymous}{}{\renewcommand{\blackout}[1]{#1}}
\ifbool{anonymous}{}{}
\ifbool{anonymous}{\newcommand{\genpartemail}[2]{\href{mailto:xxx.xxx@xx.xx}{\blackout{#1}}}}{\newcommand{\genpartemail}[2]{\href{mailto:#1@#2}{#1}}}
\ifbool{anonymous}{\newcommand{\geneurl}[1]{\href{https://thiscatexists.com/}{\blackout{#1}}}}{\newcommand{\geneurl}[1]{\url{#1}}}
\ifbool{anonymous}{\renewcommand{\orcidlink}[1]{}}{}

\newlength{\WIDTH}
\newlength{\HEIGHT}
\def\cornerrad{3pt}

\tikzset{modulearr/.style={-{Triangle[length=0.5em, scale width=0.7]}, draw=lfd1, line width=1pt}}
\tikzset{doublemodulearr/.style={{Triangle[length=0.5em, scale width=0.7]}-{Triangle[length=0.5em, scale width=0.7]}, draw=lfd1, line width=1pt}}
\tikzset{g/.style={rounded corners=\cornerrad}}

\begin{document}

\title{Beyond Gates: Pulse Level Quantum Fourier Models}

\author{
    \IEEEauthorblockN{%
        \blackout{Melvin Strobl}\IEEEauthorrefmark{2}\orcidlink{0000-0003-0229-9897},
        \blackout{Maja Franz}\IEEEauthorrefmark{4}\orcidlink{0000-0002-2801-7192},
        \blackout{Lukas Scheller}\IEEEauthorrefmark{2}\orcidlink{0009-0003-9156-7781},\\
        \blackout{Eileen Kuehn}\IEEEauthorrefmark{2}\orcidlink{0000-0002-8034-8837},
        \blackout{Wolfgang Mauerer}\IEEEauthorrefmark{4}\IEEEauthorrefmark{5}\orcidlink{0000-0002-9765-8313},
        \blackout{Achim Streit}\IEEEauthorrefmark{2}\orcidlink{0000-0002-5065-469X},
    }
 
    \IEEEauthorblockA{
        \IEEEauthorrefmark{2}
        \blackout{Karlsruhe Institute of Technology, Germany},
        \{\genpartemail{melvin.strobl}{kit.edu}, \genpartemail{lukas.scheller}{kit.edu}, \genpartemail{eileen.kuehn}{kit.edu}, \genpartemail{achim.streit}{kit.edu}\}@\blackout{kit.edu}
    }
    \IEEEauthorblockA{
        \IEEEauthorrefmark{4}
        \blackout{Technical University of Applied Science Regensburg, Germany},
        \{\genpartemail{maja.franz}{othr.de}, \genpartemail{wolfgang.mauerer}{othr.de}\}@\blackout{othr.de}
    }
    \IEEEauthorblockA{
        \IEEEauthorrefmark{5}
        \blackout{Siemens AG, Foundational Technology, Munich, Germany}
    }
}

\maketitle

\begin{abstract}
    \unboldmath 
    In the domain of variational quantum algorithms, \glspl{qfm} provide a mathematically well defined structure for \gls{qml}.
    There has been a substantial amount of work on the scalability and trainability of such models showcasing the potential but also the limitations for the prospective application of \glspl{qfm}.
    However, much less is known in the context of pulse-level quantum computing, where the microwave parameters that implement unitary operations on the hardware are used to perform computations directly instead of through the interface of quantum circuits.
    In this work, we evaluate~\glspl{qfm} through the lens of pulse parameters and link metrics such as expressibility and \gls{fcc} to this extended set of variational parameters.
    We show that while control over pulse shapes does not significantly alter the global expressibility or structural correlations of an \as, it fundamentally alters the local optimisation landscape. 
    For composite gates, independent pulse scalings replace a single logical angle by multiple independently tunable sub-angles. 
    This relaxes the rigid monomial couplings induced by the gate-level parameterisation, and provides gradient descent with higher-dimensional escape routes, decoupling local parameter constraints and significantly boosting performance during training.
    Following an analytical proof, we show numerical results validating our theory on training a \gls{qfm} with an exponential (ternary) feature map on a Fourier series with the same frequencies.

\end{abstract}

\begin{IEEEkeywords}
    Quantum Computing, Quantum Machine Learning, Quantum Fourier Models, Pulse Level Quantum Computing
\end{IEEEkeywords}

\glsresetall

\section{Introduction}

In the field of \gls{qml}, \glspl{pqc} are generally used when training models conceptually similar to classical neural networks~\cite{mitarai_quantum_2018, benedetti_parameterized_2019}.
Among these, the class of data-reuploading models~\cite{perezsalinas_data_2020}, also known as \glspl{qfm}, are especially interesting as they impose a mathematically well defined structure on the \gls{pqc}~\cite{schuld_effect_2021}, allowing us to study trainability and dequantisability~\cite{landman_classically_2022,sweke_potential_2025}.
The trainability of PQCs is generally bottlenecked by two distinct phenomena: \glspl{bp}~\cite{ragone_lie_2024} (vanishing gradients globally) and sub-optimal local minima (traps in the loss landscape). 
While much literature focuses on avoiding BPs via structural constraints~\cite{cerezo_cost_2021}, less attention is given to navigating the highly non-convex local landscapes of structured models like QFMs.

In the picture of such \glspl{qfm}, the variance of coefficients, richness of the spectrum and its bias~\cite{duffy_spectral_2025} play an important role for the expressivity of the model~\cite{mhiri_constrained_2024}.

Furthermore, correlations between frequency components coin a fingerprint, unique for each \as~\cite{strobl_fourier_2025} and indicate a limiting factor when training \glspl{qfm}.

Aforementioned works have been focused on the simulation of such models using gate-based circuit structures.
However, running any quantum algorithm on a real device requires
\begin{enumerate*}[label=(\roman*)]
    \item compilation into a series of supported basis gates
    \item application of pulse sequences to the quantum system, describing these gates~\cite{krantz_quantum_2019}.
\end{enumerate*}
As the focus of this work lies on the fundamental optimisation landscape and trainability of QFMs, we restrict our analysis to the coherent pulse dynamics, leaving hardware-specific noise~\cite{greiwe_effects_2023,maschek_make_2025} and calibration overheads for future empirical studies.

In this picture, pulse sequences are subject to additional parameters, independent of the trainable- or input parameters typically seen in \glspl{qfm}.
Fine-tuning of such parameters is subject to \gls{qoc} and carried out by the hardware provider.
Given that one has access to such parameters, they can be treated as an extended set of parameters in the \gls{qml} context.

The literature on quantum-classical algorithms like \gls{qml} and its applications is abundant~\cite{bayerstadler_industry_2021,carbonelli_challenges_2024,yue_challenges_2023}; while
the integration of pulse-level parameters into \gls{qml} architectures has been explored previously, it is much a much less explored aspect.
In Ref.~\cite{liang_napa_2024}, the authors presented an approach of optimising parametrised pulse characteristics for desired operations and \ase, showcasing advantages in expressibility, entangling capability and efficiency with practical benefits demonstrated in various applications.
\citeauthor{magann_pulses_2021}~\cite{magann_pulses_2021} provide a more in-depth overview on how optimal control within \glspl{pqc} necessitates sufficient control resources without which the potential benefits offered by pulse-level control may be limited.
More recently, \citeauthor{acedo_pulsed_2025}~\cite{acedo_pulsed_2025} formulated a pulse-level equivalent of data reuploading models (\glspl{qfm}) showcasing noise robustness on superconducting transmon processors.

\begin{figure}[htb]
    \centering
    \input{figures/overview.tex}
    \caption{Illustration of our contribution. The optimisation trajectory of a \gls{qfm} in $\mathcal{L}$ can be improved by extending the \textcolor{color2}{gate parameters} $\btheta$ through access to \textcolor{color1}{pulse parameters} $\bPi$.
    This 
    allows to escape local minima and increases the rank of the coefficient Jacobian $\operatorname{rank}(J)$. Practically, the combined optimisation of $\btheta \oplus \bPi$ consistently lowers the \gls{mse} towards a global optimum for all \ase examined in this work.}
    \label{fig:overview}
\end{figure}

In this work, we aim to contribute to a more theoretical understanding of the pulse-level representation of \glspl{qfm} and show that access to pulse parameters can activate or unsuppress coefficients associated with frequencies already present in the encoded frequency set, and can enlarge the algebraic support relative to gate-based \glspl{qfm} and thus significantly improve the training of such models, even beyond over-parametrised \ase obtained through decomposition into basis gates.

Analytical findings are supported through numerical experiments for which we provide the source code for reproduction purposes~\cite{mauerer_1_2022} on Github\footnote{https://github.com/cirKITers/pulse-level-quantum-fourier-models} and in Ref.~\cite{strobl_cirkiters/pulse_2026}.
We illustrate the core contribution of our work in~\autoref{fig:overview}.

\section{Method}
\label{sec:method}

This section describes the theoretical foundations and methodology of this work.
We generally denote vectors by bold symbols; for example, $\btheta$ denotes the vector with components $\theta_k$.

\subsection{Quantum Fourier Models}
\label{subsec:quantum-fourier-models}

Using the same framework as in Ref.~\cite{strobl_fourier_2025}, we define a \gls{qfm} with $L$ layers by
\begin{equation*}
  \hat{U}(\bx, \btheta)
  =
  \hat{W}^{(L+1)}(\btheta)\hat{S}(\bx)\hat{W}^{(L)}(\btheta)\cdots \hat{W}^{(2)}(\btheta)\hat{S}(\bx)\hat{W}^{(1)}(\btheta).
\end{equation*}
Here, $\hat{W}^{(\ell)}$ and $\hat{S}$ are $n$-qubit trainable and encoding unitaries, respectively, which depend on trainable parameters $\btheta$ and input parameters $\bx$.

Using the notation of Ref.~\cite{schuld_effect_2021}, the expectation value of such a model can be written as a truncated Fourier series,
\begin{equation}
    f(\bx, \btheta)
    =
    \sum_{\bomega \in \uniquespec} c_{\bomega}(\btheta)\exp{(\imag \bomega.\bx)},
    \label{eq:fourier-series-simple}
\end{equation}
where $\uniquespec$ denotes the set of unique frequency vectors and $c_{\bomega}(\btheta)$ is the complex-valued Fourier coefficient associated with $\bomega$.
An analytical expression for the Fourier coefficients was given by \citeauthor{wiedmann_fourier_2024}~\cite{wiedmann_fourier_2024}:
\begin{align}
        c_{\omega}(\btheta) = \sum_{\substack{s, c \in \mathbb{N}_{0}^{d}                                                      \\ s^{\prime}, c^{\prime} \in \mathbb{N}_{0}^{w}}}
         \Bigg(&\frac{k_{s, c, s^{\prime}, c^{\prime}}(-\imag)^{\sum_{j=1}^{d} s_{j}}}{2^{\sum_{j=1}^{d}\left(s_{j}+c_{j}\right)}}
        \times p_{s, c}(\omega)\times\nonumber\\
         & \prod_{k=1}^{w} \sin \left(\theta_{k}\right)^{s_{k}^{\prime}} \cos \left(\theta_{k}\right)^{c_{k}^{\prime}}\Bigg).
    \label{eq:exact-coefficients}
\end{align}
Here, $p_{s,c}(\bomega)$ collects all contributions to the coefficient at frequency $\bomega$, while $s,c$ and $s',c'$ count the numbers of sine and cosine factors associated with the input and trainable parameters, respectively.
The constants $k_{s,c,s',c'}$ depend on the chosen basis functions.

Extending the notation of Ref.~\cite{wiedmann_fourier_2024}, it is useful to distinguish three related notions of frequency support: 
(a)~The \emph{encoded frequency set} is the set of frequencies determined by the eigenvalue gaps of the data-encoding generators~--~the set \(\uniquespec\) appearing in \autoref{eq:fourier-series-simple}. Since pulse parameters enter only through the trainable unitaries, they do not change this encoded frequency set. 
(b)~For a fixed family of \ase \(\mathcal A\), not every frequency in \(\uniquespec\) needs to have a non-vanishing coefficient. 
We therefore define the \emph{algebraic coefficient support}
\begin{equation*}
    \uniquespec_{\mathrm{supp}}^{\mathcal A}
    \coloneq
    \left\{
        \bomega \in \uniquespec
        \mid
        c_{\bomega}(\bp) \not\equiv 0
    \right\},
\end{equation*}
where \(\bp\) denotes the trainable parameters of the chosen implementation.
For a gate-level implementation \(\bp=\btheta\), whereas for a pulse-level implementation \(\bp=(\btheta,\bPi)\), or equivalently \(\bp=(\btheta,\blambda)\) when only effective pulse scalings are considered (\cf \autoref{subsec:pulse-parameters-in-quantum-fourier-models}).
Frequencies in \(\uniquespec \setminus \uniquespec_{\mathrm{supp}}^{\mathcal A}\) are allowed by the feature map but vanish for the chosen \as, for example due to algebraic cancellations or symmetries.
(c)~For a parameter-sampling distribution \(\mu\) and threshold
\(\tau>0\), we define the set of \emph{high-variance coefficients} by
\begin{equation}
    \uniquespec_{\mathrm{act}}^{\mathcal A}(\mu,\tau)
    \coloneq
    \left\{
        \bomega \in \uniquespec_{\mathrm{supp}}^{\mathcal A}
        \;\middle|\;
        \operatorname{Var}_{\bp\sim\mu}
        \left[c_{\bomega}(\bp)\right]>\tau
    \right\}.\label{eq:highvar}
\end{equation}
Saying that pulse parameters activate frequencies (\cf \autoref{subsec:coefficient-variance}) means
they can increase the variance of coefficients already contained in
the encoded frequency set \(\uniquespec\), but do not introduce new encoded
frequencies.

We will also use the coefficient-space map
\begin{equation}
    C_{\mathcal A}:\mathcal P \longrightarrow \mathbb C^{|\uniquespec|},
    \qquad
    C_{\mathcal A}(\bp)
    \coloneq
    \bigl(c_{\bomega}(\bp)\bigr)_{\bomega\in\uniquespec},
    \label{eq:coefficient-space-map}
\end{equation}
where \(\mathcal P\) is the parameter domain of the chosen implementation.
Training against a target coefficient vector
\(C^*=(c_{\bomega}^*)_{\bomega\in\uniquespec}\) can be written as
\begin{equation*}
    \loss(\bp)
    =
    \left\|C_{\mathcal A}(\bp)-C^*\right\|_2^2.
\end{equation*}
For real-valued expectation values, the Fourier coefficients obey the Hermitian
symmetry $c_{-\bomega}=\overline{c_{\bomega}}$,
and the same constraint must be imposed on the target coefficients $c^{*}_{\bomega}$.

The spectrum of a \gls{qfm} is determined by eigenvalue differences of the generators appearing in the feature map~\cite{landman_classically_2022}.
For each encoded input component $x_i$, let
\begin{align}
    \Omega_i
    &=
    \left\{
        \Lambda_{i,\boldsymbol{j}} - \Lambda_{i, \boldsymbol{k}}
        \mid
        \boldsymbol{j}, \boldsymbol{k} \in \llbracket 1, d_i \rrbracket^{L}
    \right\},\nonumber\\
    \fullspec &= \Omega_1 \times \cdots \times \Omega_D,
    \label{eq:spectrum}
\end{align}
with multi-indices $\boldsymbol{j}=(j_1,\dots,j_L)$ and $\boldsymbol{k}=(k_1,\dots,k_L)$. 
\begin{equation*}
    \Lambda_{i,\boldsymbol j}
    =
    \sum_{\ell=1}^{L}\lambda_{j_\ell}^{(i,\ell)}.
\end{equation*}
is the sum of eigenvalues selected by the multi-index $\boldsymbol j$ across the $L$ occurrences of the $i$-th encoding generator.

Naturally, frequency-dependent redundancies arise when different eigenvalue-gap combinations generate the same frequency~\cite{landman_classically_2022,jaderberg_let_2023}.
Following \citeauthor{mhiri_constrained_2024}~\cite{mhiri_constrained_2024}, we denote the corresponding \emph{frequency generator} by $R(\bomega)$.
These redundancies induce a feature-map-dependent weighting over $\uniquespec$, and the size of the generator controls the variance of the coefficients via~\cite{mhiri_constrained_2024}
\begin{equation}
    \begin{split}
        \operatorname{Var}_{\btheta}[c_{\bomega}(\btheta)]
        \leq{}
        &\operatorname{Var}_{\text{Haar}}[c_{\bomega}]
        \\
        &+
        \left(
            \frac{C_1\varepsilon}{d^2}
            +
            \frac{C_2\varepsilon}{d(d+1)}
        \right)
        |R(\bomega)|
        \\
        &+
        C_2\frac{\varepsilon^2}{d^2}|R(\bomega)|^2.
    \end{split}
    \label{eq:variance-bound}
\end{equation}
Here, the $\varepsilon$-distance to a $2$-design quantifies how well the ensemble of unitaries generated by uniformly sampling $\btheta$ covers the unitary group.
This directly connects to the notion of \emph{expressibility}~\cite{sim_expressibility_2019}: highly expressible \ase cover the unitary space more uniformly, corresponding to $\varepsilon \to 0$.
Accordingly, for highly expressible models, the coefficient variance approaches the Haar-random limit, whereas restricted expressibility tightens the bound through the multiplicities encoded in $R(\bomega)$.
We refer to~\refapx{app:expressibility-metric} for details on the computation of this metric.

In addition to expressibility, one may quantify global dependencies between Fourier coefficients through the \gls{fcc}.
Since this metric is not needed for the arguments below, we defer its precise definition to~\refapx{app:fourier-coefficient-calculation}.

\subsection{Pulse Level Implementation of Unitary Gates}
\label{subsec:pulse-level-implementation-of-unitary-gates}
Parameterised gates are the basic building blocks of trainable quantum circuits. During transpilation, such gates are decomposed into basis gates supported by the target hardware.
Each basis gate is then realised by a control pulse sequence implementing the corresponding unitary evolution.
It is useful for the below calculations to consider this
in the wider context of a system-level description
of a QPU: For many platforms, notably including superconducting  Transmons, the system Hamiltonian can be modelled as
\(\hamiltonian(t) = \hamiltonian_{\text{static}} + \hamiltonian_{\text{drive}}(t) + \hamiltonian_{\text{dynamic}}(t)\), where 
$\hamiltonian_{\text{static}}$ describes unperturbed energy levels of the qubits and their constant couplings;
$\hamiltonian_{\text{drive}}(t)$ represents external pulses applied to manipulate the qubit states (this is where pulse parameters $\bPi$ enter), and 
$\hamiltonian_{\text{dynamic}}(t)$ accounts for time-varying couplings or environmental interactions, often used to model noise or specific tunable couplers.

We denote a single control pulse by $S_j(t,\bPi)$, where $\bPi$ collects the pulse-shape parameters and $t$ denotes time.
A standard scalar decomposition model (assuming no complex amplitudes or more than one quadrature) is
\begin{equation*}
    S_j(t,\bPi) = E_j(t,\bPi)\cos(\omega_c t + \phi_c),
\end{equation*}
where $E_j(t,\bPi)$ is the pulse envelope, and $\omega_c$ and $\phi_c$ are carrier frequency and carrier phase, respectively.
Combining the control signals with the control Hamiltonians $\hamiltonian_j$ yields the time-dependent Hamiltonian
\begin{equation}
    \hamiltonian(t,\bPi)
    =
    \hamiltonian_{\text{static}}
    +
    \sum_j S_j(t,\bPi)\,\hamiltonian_j,
    \label{eq:hamiltonian}
\end{equation}
which governs the evolution of a state $\ket{\psi(t)}$.

For a time-independent Hamiltonian, the evolution from $t_0$ to $t_0+\delta t$ is (using the convention $\hbar=1$)
$\ket{\psi(t_0+\delta t)} =
e^{-\imag \hamiltonian\delta t}\ket{\psi(t_0)}$.
For the time-dependent Hamiltonian in~\autoref{eq:hamiltonian}, the time evolution of a quantum state is given by ($\dysontime$ denotes time ordering~\cite{efthymiou_qibo_2022,rothesantos_evaluation_2025})
\begin{equation}
    \ket{\psi(t_0 + \delta t)}
    =
    \dysontime
    \exp\left(
        -\imag \int_{t_0}^{t_0 + \delta t}\hamiltonian(t,\bPi)\dt
    \right)
    \ket{\psi(t_0)}.
    \label{eq:time-evolution}
\end{equation}
Typically, this equation must be solved numerically. To balance computational economy and result accuracy, we perform the required computations, but using the common \gls{rwa} as detailed below. We refer to Ref.~\cite{franz_software_2026} for implementation details.
%


\subsection{Pulse Parameters in Quantum Fourier Models}
\label{subsec:pulse-parameters-in-quantum-fourier-models}\label{prop:fourier-params}

Combining~\autoref{subsec:quantum-fourier-models} and~\autoref{subsec:pulse-level-implementation-of-unitary-gates}, we now study the effect of treating pulse parameters $\bPi$ as additional trainable variables during \gls{qfm} optimisation.
Since these parameters appear only in the trainable unitaries, they do not alter frequencies in the exponent of~\autoref{eq:fourier-series-simple}; they only modify Fourier coefficients.
This allows us to compare gate-level and pulse-level optimisation while keeping the represented frequency set fixed.

Among our numerical results, it is useful to consider a simplified fixed-axis pulse-area model to guide intuition on the observed phenomena, and to motivate our chosen diagnostics. 
We employ the canonical \gls{rwa}~\cite{krantz_quantum_2019} to absorb a rapidly oscillating carrier into an effective drive Hamiltonian. 
Gate actions are then governed primarily by  pulse envelope and phase, while the generator $\tildeH_j$ is fixed by the rotation axis. 
Pulse parameters then act as effective angle scalings.

To isolate the main mechanism, consider first a single-axis rotation in the effective rotating frame.
Assume that the corresponding effective Hamiltonian takes the form
\begin{equation*}
    \hamiltonian_{\text{eff}}(t, \theta, \bPi)
    =
    \theta E(t,\bPi) \tildeH_{j},
\end{equation*}
where $\tildeH_j$ is time independent.\footnote{The scalar pulse-area model is appropriate when pulse parameters modify only the amplitude or duration of a resonant fixed-axis control (\ie, the pulse changes only the total rotation angle). 
In more general pulse models that consider, for instance, detuning, time-dependent phases, leakage or crosstalk, pulse parameters can affect several independent physical contributions to the implemented operation. 
In that case, $\lambda(\bPi)$ would need to be replaced by a vector, which we do not consider in the present paper. 
Pulse optimisation could then change not only the size of the rotation, but also the effective axis or phase, or introduce multi-qubit interaction terms generated by the pulse.}
Then the resulting unitary (assuming fixed-axis resonant control and commuting effective Hamiltonians $[\hamiltonian_{\text{eff}}(t), \hamiltonian_{\text{eff}}(t')]=0$)\footnote{Laboratory-frame Hamiltonians do not to commute at different times.
In the rotating frame with the \gls{rwa} applied, a resonant fixed-phase single-axis drive is described by an effective Hamiltonian $\hamiltonian_{\mathrm{eff}}(t)=\Omega(t)G$, where $G$ is time independent.
Only in this effective description is the time ordering trivial, and the pulse affects the gate primarily through the pulse area $\int \Omega(t)\dt$.} is
\begin{equation}
    \begin{aligned}
      \hat{U}(\theta,\bPi)
      &=
      \exp\left(
          -\imag\theta \int_{0}^{\delta t} E(\tau, \bPi)\tildeH_{j}\mathrm{d}\tau
      \right)
      \\
      &=
      \exp\left(
          -\imag\theta\lambda(\bPi)\tildeH_{j}
      \right),
    \end{aligned}
    \label{eq:unitary-rectangular-pulse}
\end{equation}
 with the effective pulse-area factor
\begin{equation*}
    \lambda(\bPi) \coloneq \int_{0}^{\delta t} E(\tau,\bPi)\mathrm{d}\tau .
\end{equation*}
For a rectangular envelope $E(t, A)=A$ on $[0, \delta t]$, this reduces to $\lambda(\bPi)=A\,\delta t$.
Any constant prefactors arising from calibration or the \gls{rwa} can be absorbed into $\lambda(\bPi)$.

Thus, for a single basis rotation, the pulse parameter acts only through the effective angle $\theta\lambda(\bPi)$.
In this simplified setting, optimising $\bPi$ amounts to a reparameterisation of the original gate parameter.
This remains valid if $\delta t$ itself is included in $\bPi$: its effect is still mediated through the scalar factor $\lambda(\bPi)$.

For more general envelopes, the dependence on $\bPi$ need not reduce exactly to a single scalar factor at the laboratory-frame level.
However, in the effective single-axis description the dominant effect is still a pulse-dependent rescaling of the rotation angle, and the argument below carries through with only notational modifications.
For example, using a textbook calculation~\cite{arfken_mathematical_1985} for a Gaussian envelope $E(t,A,\sigma)=A\,\exp\!\left(-\nicefrac12\left(\frac{t-t_c}{\sigma}\right)^2\right)$, we find 
\begin{align}
\lambda(A,\sigma) &=
\int_{0}^{\delta t} E(t, A,\sigma)\text{d}t\nonumber \\&
= 
A\sigma\sqrt{\frac{\pi}{2}}
\left[
\operatorname{erf}\left(
\frac{\delta t-t_c}{\sqrt{2}\sigma}
\right)
+
\operatorname{erf}\left(
\frac{t_c}{\sqrt{2}\sigma}
\right)
\right]\\
&\approx A\sigma\sqrt{2\pi}
\end{align}
where the approximation holds for a centered pulse ($t_{c}=\nicefrac{\delta t}{2}$) that is contained in the pulse window ($\delta t\gg \sigma$).

The situation changes for composite gates.
There, a single logical gate parameter $\theta$ controls several sub-rotations, while the pulse parameters of the sub-gates can vary independently.
This distinction is illustrated by the following example.

\begin{example}[Composite gate with independent pulse parameters]
    \label{ex:crx-example}
    Assume the basis gate set $\left\{\RX,\RY,\RZ,\CX\right\}$ and the decomposition
    \begin{equation*}
        \CRX(\theta)
        =
        \RZ(\nicefrac\pi2)\cdot \RY(\nicefrac\theta2)\cdot \CX \cdot \RY(-\nicefrac\theta2)\cdot \CX \cdot \RZ(-\nicefrac\pi2).
    \end{equation*}
    Let the $m$-th basis gate (out of $M$ sub-gates) carry its own pulse parameter $\bPi_m$, and let $\lambda_m(\bPi_m)$ denote the corresponding effective scaling.
    Then the implemented unitary may be written as
    \begin{equation*}
        \hat{U}_{\mathrm{CRX}}(\theta,\{\bPi_m\}_{m=1}^M)
        =
        \prod_{m=1}^{M}
        U_m\!\bigl(\alpha_m(\theta),\bPi_m\bigr),
    \end{equation*}
    where $\alpha_m(\theta)$ is the sub-gate angle induced by the global logical parameter $\theta$.
    Using $e^{-\imag \frac{\gamma}{2}P}=\cos(\frac{\gamma}{2})\mathbbm{1}-\imag\sin(\frac{\gamma}{2})P$, each sub-gate contributes factors of the form
    \begin{align*}
    &\sin\left(\alpha_m(\theta)
      \lambda_m(\bPi_m)/2\right),
    &\cos\left(\alpha_m(\theta)
     \lambda_m(\bPi_m)/2\right).
    \end{align*}
    Since a single $\theta$ controls several distinct angles $\alpha_m(\theta)$, while the $\lambda_m(\bPi_m)$ are independent, it is generically not reducible to a single reparametrised angle.
    In particular, on the control branch $\ket{0}$, the standard decomposition yields
    \begin{equation*}
        \RZ(\nicefrac\pi2) \RY(\nicefrac\theta2) \RY(-\nicefrac\theta2) \RZ(-\nicefrac\pi2)=\mathbbm{1}.
    \end{equation*}
    If the two $\RY$ sub-gates are independently rescaled by $\lambda_1$ and $\lambda_2$, then the same branch becomes
    \begin{equation*}
        \begin{aligned}
            &\RZ(\nicefrac\pi2)
            \RY\left(\nicefrac{\theta}{2}\lambda_1\right)
            \RY\left(-\nicefrac{\theta}{2}\lambda_2\right)
            \RZ(-\nicefrac\pi2)
            \\
            &\qquad =
            \RX\left(-\nicefrac{\theta}{2}(\lambda_1-\lambda_2)\right).
        \end{aligned}
    \end{equation*}
    Hence, if $\lambda_{1} \neq \lambda_{2}$, the operation on the $\ket{0}$ branch is no longer the identity.
    The resulting controlled operation therefore leaves the original $\CRX$ gate family, even though it remains block diagonal in the computational basis.
    This demonstrates that pulse-level flexibility in a composite implementation spans a larger local unitary family than simple angle reparameterisation of the logical gate.
\end{example}

More generally, if $U(\theta,\bPi)$ is a pulse-level realisation of a single-axis basis rotation, then it is unitarily equivalent to $\R_j(\theta\lambda(\bPi))$.
By contrast, for a composite logical gate
\begin{equation*}
    \hat{U}_{\text{comp}}\left(\theta, \{\bPi_m\}_{m=1}^{M}\right)
    =
    \prod_{m=1}^{M} U_m \bigl(g_m(\theta), \bPi_m\bigr),
\end{equation*}
the Pauli decomposition contains terms of the form
\begin{equation*}
    \prod_{m=1}^{M}
    \sin\bigl(g_m(\theta)\lambda_m(\bPi_m)\bigr)^{a_m}
    \cos\bigl(g_m(\theta)\lambda_m(\bPi_m)\bigr)^{b_m},
\end{equation*}
with some sub-gate specific factors $g_m$, $a_m$ and $b_m$.
Whenever $M\ge 2$, at least two sub-gates act on different axes, and the $\lambda_m(\bPi_m)$ are independently adjustable, these factors are not reducible to a function of a single effective angle.

To clarify this mechanism further, let a \gls{qfm} contain $w$ trainable basis rotations, and let the $k$-th of these gates be implemented in the effective single-axis form
\begin{equation*}
    \hat U_k(\theta_k,\bPi_k)
    =
    \exp \bigl(-\imag \theta_k\lambda_k(\bPi_k) \tildeH_k\bigr),
\end{equation*}
where $\tildeH_k$ is the corresponding generator.
Consider the effective angle $\varphi_k := \theta_k\lambda_k(\bPi_k)$, then the Fourier coefficients retain the monomial structure of Ref.~\cite{wiedmann_fourier_2024} and may be written as
\begin{equation}
    \begin{aligned}
        c_{\bomega}(\btheta,\bPi)
        =
        \sum_{(s,c,s',c')}\Big(
        &\kappa_{s,c,s',c'}\,p_{s,c}(\bomega)
        \times\\
        &
        \prod_{k=1}^{w}
        \sin(\varphi_k)^{s'_k}
        \cos(\varphi_k)^{c'_k}\Big),
    \end{aligned}
    \label{eq:fourier-series-expanded}
\end{equation}
where $\kappa_{s,c,s',c'}$ absorb constants and phases from~\autoref{eq:exact-coefficients}.

In particular, the map
\begin{equation*}
    (\btheta,\bPi)\mapsto \bphi=(\varphi_1,\dots,\varphi_w)
\end{equation*}
has image in $\mathbb{R}^w$, so basis-gate pulse parameters alone do not enlarge the reachable coefficient set; they only change its parametrisation.
Equivalently, the Jacobian
\begin{equation*}
    \frac{\partial \bphi}{\partial(\btheta,\bPi)}
\end{equation*}
has rank at most $w$ in a $(w+|\bPi|)$-dimensional parameter space and therefore possesses a non-trivial kernel.

Now let a logical trainable gate $k$ be implemented as a composite gate
\begin{equation*}
    \hat U_{\mathrm{comp},k}\bigl(\theta_k,\{\bPi_{k,m}\}_{m=1}^{M_k}\bigr)
    =
    \prod_{m=1}^{M_k}
    \hat{U}_{k,m} \bigl(g_{k,m}(\theta_k), \bPi_{k,m}\bigr).
\end{equation*}
Then its Pauli decomposition contains factors of the form
\begin{align}
    \prod_{m=1}^{M_k}
    &\sin\bigl(g_{k,m}(\theta_k)\lambda_{k,m}(\bPi_{k,m})\bigr)^{a_{k,m}}\times\nonumber\\
    &\cos\bigl(g_{k,m}(\theta_k)\lambda_{k,m}(\bPi_{k,m})\bigr)^{b_{k,m}}\label{eq:decomp_multi}.
\end{align}
These terms are usually not reducible to a function of a single effective logical angle.
Therefore, at the logical-gate level, independent pulse parameters can relax the rigid monomial coupling present in the gate-based representation.

\subsection{Effects on Learning}
In standard gate-based QML, escaping poor stationary points often requires over-parameterisation, for instance by increasing circuit depth~\cite{larocca_theory_2023}.
The above consideration shows that pulse-level control provides a hardware-native analogue of this approach:
for basis gates it changes the optimisation parameterisation, whereas for composite gates it can introduce additional local directions in coefficient space without changing the encoded frequency set.

As~\autoref{eq:decomp_multi} shows, pulse parameters can partially decouple terms that are rigidly linked at the logical-gate level.
For non-rectangular envelopes, additional pulse-shape dependence generally produces further correction terms; this does not alter the encoded spectrum, but it can increase the local flexibility of the coefficient map.

\begin{remark}
    \label{rem:expressibility-local-vs-global}
    Pulse parameters need not substantially change global metrics such as expressibility, even when they improve trainability.
    For the family of \ase considered here, their primary effect is to modify the parameterisation of the trainable unitary family and, for composite gates, to enlarge local tangent directions accessible during optimisation.
    Accordingly, pulse-level control can smooth the local loss landscape without necessarily changing the global set of representable frequencies.
\end{remark}

\subsection{Training and Trainability}
\label{subsec:training-and-trainability}

We consider a target function $f^*$ whose frequency support matches that of the \gls{qfm}, with coefficients $C^*$ (\cf \autoref{subsec:quantum-fourier-models}) drawn randomly from the unit disk.
In this subsection, we restrict to the one-dimensional input case for simplicity.
Assuming that the frequencies are orthogonal on $[0,2\pi]$ (for example, $\Omega \subset \mathbb{Z}$), Parseval's identity gives
\begin{equation}
    \begin{aligned}
        \loss(\btheta)
        &=
        \frac{1}{2\pi}\int_{0}^{2\pi}\bigl|f(x,\btheta)-f^{*}(x)\bigr|^2 \mathrm{d}x
        \\
        &=
        \sum_{\omega\in\Omega}
        \bigl|c_\omega(\btheta)-c^{*}_{\omega}\bigr|^2.
    \end{aligned}
    \label{eq:loss}
\end{equation}
Thus, training can be viewed as matching the model coefficients to the target coefficients in Fourier space.

Following~\autoref{eq:exact-coefficients} and Ref.~\cite{mhiri_constrained_2024}, different frequencies generally can depend on the same gate parameter \(\theta_k\), which creates coupled constraints across coefficient directions.
Using the coefficient-space map from~\autoref{eq:coefficient-space-map}, the gate-level Jacobian is
\begin{equation}
    J_{\theta}
    \coloneq
    \frac{\partial C_{\mathcal A}}{\partial \btheta}
    \in \mathbb C^{|\uniquespec|\times |\btheta|}.
    \label{eq:gate-level-jacobian}
\end{equation}
Since the trainable parameters are real while the Fourier coefficients are complex, all ranks and tangent-space dimensions below are understood over \(\mathbb R\) with \(\mathbb C^{|\uniquespec|}\cong \mathbb R^{2|\uniquespec|}\).

If we define the residual vector
\begin{equation*}
    \br
    \coloneq
    C_{\mathcal A}(\btheta)-C^*,
    \qquad
    r_{\bomega}
    =
    c_{\bomega}(\btheta)-c^*_{\bomega},
\end{equation*}
then a first-order critical point of \(\loss\) satisfies
\begin{equation*}
    \operatorname{Re}
    \left(
        \br^\dagger
        \frac{\partial C_{\mathcal A}}{\partial \theta_k}
    \right)
    =
    \sum_{\bomega\in\uniquespec}
    \operatorname{Re}
    \left(
        r_{\bomega}^{*}
        \frac{\partial c_{\bomega}}{\partial \theta_k}
    \right)
    =
    0
    \qquad
    \forall k.
\end{equation*}
This condition is automatically satisfied at a global minimum, where \(C_{\mathcal A}(\btheta)=C^*\). 
More generally, it is also satisfied at non-global stationary points whenever the residual is orthogonal to all tangent directions generated by the columns of \(J_{\theta}\).

To build geometric intuition, assume that \(|\uniquespec|>|\btheta|\). 
Then the coefficient-space map
\begin{equation*}
    C_{\mathcal A}:\mathbb R^{|\btheta|}
    \longrightarrow
    \mathbb C^{|\uniquespec|}
    \cong
    \mathbb R^{2|\uniquespec|}
\end{equation*}
has an image that, locally and at regular points, is an immersed manifold \(\manifold\) of dimension
\begin{equation*}
    \dim T_{\bc}\manifold
    =
    \operatorname{rank}_{\mathbb R} J_{\btheta}
    \leq
    |\btheta|.
\end{equation*}
Local optimisation is equivalent to moving on this coefficient manifold to reduce the Euclidean distance to the target point \(C^*\).
Because the coefficients are trigonometric polynomials in the parameters,
\begin{equation*}
    c_\omega(\btheta)
    =
    \sum_{\ell}
    \kappa_\ell
    \prod_{k=1}^{w}
    \sin(\theta_k)^{s'_{\ell,k}}
    \cos(\theta_k)^{c'_{\ell,k}},
\end{equation*}
the manifold \(\manifold\) is generally highly curved and may exhibit self-intersections, singular points, or narrow tangent directions.

As in~\autoref{subsec:pulse-parameters-in-quantum-fourier-models}, two distinct cases arise when pulse parameters are introduced and we optimise
\begin{equation*}
    \loss(\btheta, \bPi)
    \coloneq
    \|\bc(\btheta, \bPi)-\bc^*\|^2.
\end{equation*}

\paragraph{Basis gates}
If each trainable gate remains an effective single-axis basis rotation, then the reachable coefficient manifold does not change; only its parameterisation changes.
By the chain rule,
\begin{equation}
    \frac{\partial \loss}{\partial \Pi_k}
    =
    \frac{\partial \loss}{\partial \varphi_k}\,
    \theta_k
    \frac{\partial \lambda_k}{\partial \Pi_k},
    \label{eq:gradient-expanded}
\end{equation}
where $\varphi_k=\theta_k\lambda_k(\Pi_k)$.
Hence, pulse parameters act as parameter-dependent rescaling or preconditioning of the gradient field (given that
$\theta_{k}\neq 0$ and $\partial \lambda_{k}/\partial \Pi_{k}\neq 0$ as otherwise, the pulse parameters provide no first-order direction).
They can improve conditioning and alter optimisation trajectories, but they do not enlarge the set of reachable coefficients.

\paragraph{Composite gates}
If a logical gate is implemented by several basis gates with independently adjustable pulse scalings, then additional tangent directions can appear in coefficient space. 
Let \(\blambda\) denote the collection of effective pulse scalings with $J_{\lambda}(\blambda) \coloneq \frac{\partial C_{\mathcal A}}{\partial \blambda}$ and define the pulse-level Jacobian
\begin{equation}
    J_{\mathrm{ext}}(\btheta,\blambda)
    \coloneq
    \frac{\partial C_{\mathcal A}}{\partial(\btheta,\blambda)}
    =
    \begin{bmatrix}
        J_{\theta} & J_\lambda
    \end{bmatrix}.
    \label{eq:extended-jacobian}
\end{equation}
The tangent spaces accessible to first-order optimisation are\footnote{The gate-level tangent space is the space of all first-order
coefficient variations induced by infinitesimal changes of the
gate parameters. More precisely, at a point \(\theta_0\), let
\(\theta(t)=\theta_0+t v\) with \(v\in\mathbb{R}^{|\theta|}\).
Then
\begin{equation*}
C_{\mathcal{A}}(\theta(t)) = C_A(\theta_{0}) + t J_\theta(\theta_{0})v + \mathcal{O}(t^2).
\end{equation*}
Thus the accessible first-order directions in coefficient space are
\begin{equation*}
T_{\text{gate}}(\theta_{0}) \coloneq
\{J_\theta(\theta_{0})v \mid v\in\mathbb{R}^{|\theta|}\}
= \operatorname{Im} J_{\theta}(\theta_{0}).
\end{equation*}
Since the coefficients are complex-valued and the trainable
parameters are real, this tangent space is understood as a real
linear subspace of \(\mathbb{C}^{|\Omega|}\simeq\mathbb{R}^{2|\Omega|}\).}
\begin{align*}
    T_{\text{gate}}  &=
    \operatorname{Im}(J_{\theta}),
    &T_{\text{pulse}} &=
    \operatorname{Im}(J_{\text{ext}}).
\end{align*}
Consequently, the increase in local tangent-space dimension is
\begin{equation}
    \dim T_{\text{pulse}}
    -
    \dim T_{\text{gate}}
    =
    \operatorname{rank}_{\mathbb R}J_{\text{ext}}
    -
    \operatorname{rank}_{\mathbb R}J_{\theta} .
    \label{eq:manifold-increase}
\end{equation}
This implies that pulse parameters provide genuinely new local search directions when
\begin{equation*}
    \operatorname{rank}_{\mathbb R}J_{\text{ext}}
    >
    \operatorname{rank}_{\mathbb R}J_{\theta}.
\end{equation*}
Equivalently, at least one pulse derivative
\(\partial C_{\mathcal A}/\partial\lambda_{k,m}\) must have a component outside
$\operatorname{Im}(J_{\theta})$.

If \(q\) independent pulse-scaling variables are introduced, the general upper
bound is
\begin{equation*}
    0
    \leq
    \operatorname{rank}_{\mathbb R}J_{\mathrm{ext}}
    -
    \operatorname{rank}_{\mathbb R}J_{\theta}
    \leq
    \min\left\{
        q,\,
        2|\uniquespec|
        -
        \operatorname{rank}_{\mathbb R}J_{\theta}
    \right\}.
\end{equation*}
For a composite gate with \(M_g\) independently scaled sub-gates, a common rescaling of all sub-rotations is often equivalent to changing the original logical gate angle. 
In such cases, at most \(M_g-1\) directions per composite gate can be genuinely new. 
However, the actual increase is not determined solely by the number of sub-gates or rotation axes; it is always given by the
rank difference in~\autoref{eq:manifold-increase}.

Composite gates can therefore enlarge the local tangent space seen by gradient descent even though the encoded frequency set \(\uniquespec\) is unchanged.
In practice, most \ase contain both types of gates.
Basis-gate pulse parameters alter the optimisation metric on the same manifold, while composite-gate pulse parameters can add genuinely new local search directions.

Because~\autoref{eq:loss} is the squared distance from $\bc^*$ to the coefficient manifold, stationary points occur when the residual vector is orthogonal to the tangent space.
Enlarging the tangent space therefore makes it less likely that a non-zero residual remains orthogonal to all available descent directions.

\begin{example}
    Consider a logical gate with a single parameter $\theta$ and decomposition
    \begin{equation*}
        \hat{U}_{\mathrm{std}}(\theta)=\RX(a\theta) \RY(b\theta).
    \end{equation*}
    In the standard parameterisation, the pair of effective sub-angles $(a\theta, b\theta)$ lies on a one-dimensional line in $\mathbb{R}^2$, as the ratio of angles remains invariant with scaling \(\theta\).
    If the two sub-gates acquire independent pulse scalings, the implemented gate becomes
    \begin{equation*}
        \hat{U}_{\mathrm{pulse}}(\theta,\lambda_1,\lambda_2)
        =
        \RX(a\theta\lambda_1) \RY(b\theta\lambda_2),
    \end{equation*}
    so the pair of effective sub-angles becomes $(a\theta\lambda_1,b\theta\lambda_2)$.
    Near generic points with \(\theta\neq 0\), this spans a two-dimensional region rather than a one-dimensional line,
    as \(\lambda_{i}\) can be independently scaled.
    Thus, pulse parameters create local directions that are unavailable in the logical gate parameter alone. In terms of the above considerations, the change is from rank 1 to rank 2 in the Jacobian.
\end{example}

This observation leads to the following statement.

\begin{theorem}[Escape directions from non-global stationary points]
    \label{theo:local-minima}
    Consider a \gls{qfm} trained on a Fourier-compatible target
    \begin{equation*}
        f^*(x)=\sum_{\omega} c^{*}_{\omega} e^{\imag \omega x},
    \end{equation*}
    and let $\btheta^*$ be a non-global critical point of the gate-level loss $\loss(\btheta)$ with residual $\br\neq 0$.
    Assume that, at $(\btheta^{*}, \blambda=\mathbf{1})$, the derivatives with respect to at least one composite-gate pulse scaling produce a coefficient-space direction that is not contained in the span of the gate-level Jacobian columns.
    Then, for Lebesgue-almost all target coefficient vectors $\bc^*$, the point $(\btheta^*,\blambda=\mathbf 1)$ is not a critical point of the extended loss $\loss(\btheta,\blambda)$.
    In particular,
    \begin{equation*}
        \nabla_{\blambda}\loss\big|_{\blambda=\mathbf 1}\neq 0,
    \end{equation*}
    so the extended parameterisation provides a first-order escape direction.
\end{theorem}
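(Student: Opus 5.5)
The plan is to reduce the statement to a single real inner product in coefficient space and then show that this inner product vanishes only on a null set of targets. Write $\langle u,v\rangle_{\mathbb R}\coloneq\operatorname{Re}(u^\dagger v)$ for the real inner product on $\mathbb C^{|\uniquespec|}\cong\mathbb R^{2|\uniquespec|}$.

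\emph{Step 1: fixing the parameterisation at $\blambda=\mathbf 1$.} At $\blambda=\mathbf 1$ the pulse-level implementation reproduces the gate-level circuit. Hence $C_{\mathcal A}(\btheta^*,\mathbf 1)=C_{\mathcal A}(\btheta^*)$, and the $\btheta$-block of $J_{\mathrm{ext}}$ in \autoref{eq:extended-jacobian} equals $J_\theta(\btheta^*)$. Differentiating the extended loss by the chain rule gives
\begin{equation*}
\nabla_{\btheta}\loss=2\operatorname{Re}(J_\theta^\dagger\br),\qquad \frac{\partial\loss}{\partial\lambda_j}=2\langle \br,\,v_j\rangle_{\mathbb R},\qquad v_j\coloneq\frac{\partial C_{\mathcal A}}{\partial\lambda_j}\Big|_{(\btheta^*,\mathbf 1)} .
\end{equation*}
The $\btheta$-gradient vanishes by assumption. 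So the point $(\btheta^*,\mathbf 1)$ is non-critical if and only if $\langle\br,v_j\rangle_{\mathbb R}\neq0$ for some $j$. Take $j$ to be the pulse scaling supplied by the hypothesis, and decompose $v_j=v_\parallel+v_\perp$ with $v_\parallel\in\operatorname{Im}J_\theta$ and $v_\perp\in(\operatorname{Im}J_\theta)^{\perp}$, both taken in the real sense. The hypothesis says exactly that $v_\perp\neq0$.

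\emph{Step 2: parametrising the targets.} The gate-level criticality of $\btheta^*$ means $\br=C_{\mathcal A}(\btheta^*)-\bc^*\in(\operatorname{Im}J_\theta)^\perp$. Therefore $\langle\br,v_\parallel\rangle_{\mathbb R}=0$, and the relevant quantity reduces to $\langle\br,v_j\rangle_{\mathbb R}=\langle\br,v_\perp\rangle_{\mathbb R}$. For fixed $\btheta^*$, the set of targets making $\btheta^*$ critical is the affine subspace
\begin{equation*}
\mathcal C(\btheta^*)\coloneq C_{\mathcal A}(\btheta^*)-(\operatorname{Im}J_\theta)^\perp ,
\end{equation*}
and on this set $\bc^*\mapsto\br$ is an affine bijection onto $(\operatorname{Im}J_\theta)^\perp$.

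If the Hermitian symmetry $c_{-\bomega}=\overline{c_{\bomega}}$ is imposed, I would carry out the whole argument inside the real subspace $V$ of Hermitian-symmetric vectors. $C_{\mathcal A}$, $J_\theta$ and $v_j$ all take values in $V$, so $v_\perp$ can be taken in $V\cap(\operatorname{Im}J_\theta)^\perp$ and remains nonzero.

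\emph{Step 3: the null set.} The bad targets are those with $\langle\br,v_\perp\rangle_{\mathbb R}=0$. Since $v_\perp\neq0$ lies in the ambient space $(\operatorname{Im}J_\theta)^\perp$, this is a linear hyperplane of codimension one in that space. Its image in $\mathcal C(\btheta^*)$ is therefore null for the Lebesgue measure on $\mathcal C(\btheta^*)$; removing the point $\br=0$ changes nothing. For all remaining targets, $\partial\loss/\partial\lambda_j\neq0$, so $\nabla_{\blambda}\loss|_{\blambda=\mathbf 1}\neq0$. The direction $-\nabla_{\blambda}\loss$ then strictly decreases $\loss$ to first order, which is the claimed escape direction.

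\emph{Main obstacle.} The delicate point is the meaning of ``Lebesgue-almost all $\bc^*$''. Whenever $\operatorname{rank}J_\theta>0$, the set $\mathcal C(\btheta^*)$ is itself Lebesgue-null in the full target space. The statement must therefore be read as ``almost all targets for which $\btheta^*$ is a non-global critical point''. I would make this precise by endowing $\mathcal C(\btheta^*)$ with its induced Lebesgue measure.

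If a global statement is wanted, i.e.\ over all pairs $(\bc^*,\btheta^*)$ with $\btheta^*$ critical, I would use the incidence set $\{(\btheta,\bc^*):\operatorname{Re}(J_\theta^\dagger(C_{\mathcal A}(\btheta)-\bc^*))=0\}$. It is fibred over $\btheta$, with fibres $\mathcal C(\btheta)$. On the region where $v_\perp(\btheta)\neq0$, the bad set is a real-analytic subset of positive codimension in each fibre. A Fubini or coarea argument over $\btheta$ would then give measure zero. This step requires constant rank of $J_\theta$ locally, so I would first restrict to the open dense set of regular points.
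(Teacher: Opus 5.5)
Your Steps 1--3 are the paper's argument. Gate-level criticality forces $\br\in(\operatorname{Im}J_\theta)^\perp$, and the non-degeneracy hypothesis turns $\langle\br,v_\perp\rangle_{\mathbb R}=0$ into a codimension-one condition there; your reading of ``almost all $\bc^*$'' via the induced Lebesgue measure on the fibre $\mathcal C(\btheta^*)$ is the right way to make the paper's loosely phrased final sentence precise. The global incidence-set extension is not needed, and if pursued it would require $\operatorname{rank}_{\mathbb R}J_\theta=|\btheta|$ rather than mere constant rank, because otherwise the incidence set has dimension larger than the target space and a fibrewise-null bad set can project onto a positive-measure set of targets.
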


\autoref{eq:manifold-increase} gives the geometric mechanism behind this result: composite-gate pulse parameters can enlarge the local tangent space.
\autoref{theo:local-minima} then states that, under a generic non-degeneracy assumption, these additional directions almost surely destroy non-global criticality for randomly drawn targets.
A proof is provided in the following.

\begin{proof}[Proof of \autoref{theo:local-minima}]
    \label{proof:local-minima}
    At $(\btheta^*, \blambda = \mathbf{1})$, the gradient along a composite-gate pulse scaling $\lambda_{k,m}$ is:
    \begin{equation}
        \frac{\partial\mathcal{L}}{\partial\lambda_{k,m}}\bigg|_{\blambda=\mathbf{1}} = 2\,\text{Re}\!\left[\sum_\omega r^*_\omega \frac{\partial c_\omega}{\partial\lambda_{k,m}}\bigg|_{\blambda=\mathbf{1}}\right]
    \end{equation}
    For a composite gate with sub-gates on distinct axes, $\frac{\partial c_\omega}{\partial\lambda_{k,m}}$ is not in $\text{range}(J_{\theta})$ (since it generates cross-axis Pauli components absent in the $\btheta$-gradient).
    The condition $\nabla_{\blambda}\mathcal{L} = 0$ then requires $\mathbf{r} \perp V$ where $V = \text{range}(J_{\theta}) + \text{span}\big\{\frac{\partial\mathbf{c}}{\partial\lambda_{k,m}}\big\}$ strictly contains $\text{range}(J_{\theta})$.

    Since $\btheta^*$ is a local minimum, we have $\mathbf{r} \perp \text{range}(J_{\theta})$.
    The additional condition $\mathbf{r} \perp \text{span}\big\{\frac{\partial\mathbf{c}}{\partial\lambda_{k,m}}\big\}$ constrains $\mathbf{r}$ to a strict subspace of $\text{range}(J_{\theta})^\perp$.
    For $\mathbf{c}^*$ drawn from a continuous distribution, $\mathbf{r} = \mathbf{c}(\btheta^*) - \mathbf{c}^*$ lies in this subspace with Lebesgue measure zero.
\end{proof}

While this provides a formal motivation for pulse-level optimisation, it also increases the number of trainable parameters and hence the optimisation complexity.
The exact pulse-parameter counts used in our experiments are listed in~\autoref{tab:num-of-pp-per-gate}.
Note that, while pulse parameters smooth the landscape by eliminating specific constrained traps inherent to gate-level compilation, they do not guarantee a trap-free landscape unless the system becomes fully over-parametrised.

\subsection{Effect on the FCC and Expressibility}
\label{subsec:effects-on-fcc-and-expressibility}

Metrics such as the \gls{fcc} and expressibility are global properties: they are computed by sampling parameters over their full domain to characterise the overall distribution of accessible unitaries or coefficients.

For the family of \ase and pulse-parameter ranges considered here, the dominant observed effect is local rather than global. 
We therefore do not expect, and numerically do not observe, a large change in FCC or expressibility.

The same reasoning indicates why improved trainability need not be accompanied by a larger \gls{fcc}.

The \gls{fcc} measures normalised linear correlations between coefficients, whereas pulse parameters can increase the variance of individual coefficients without substantially changing the underlying correlation structure.
In particular, since the Pearson correlation normalises by the coefficient variances, unsuppressing previously weak coefficients need not increase the global \gls{fcc}.

A key point is that pulse parameters do not introduce new frequencies.
Rather, they can unsuppress existing frequencies by breaking rigid algebraic cancellations inherited from composite gate decompositions; see~\autoref{prop:fourier-params}.
This improves local trainability while leaving the global frequency support unchanged.

Likewise, the variance bound of~\citeauthor{mhiri_constrained_2024}~\cite{mhiri_constrained_2024} relates coefficient variance to expressibility through an upper bound.
A circuit need not saturate that bound.
Pulse parameters can therefore move coefficient variances closer to their admissible upper range without implying any substantial change in the global expressibility of the \as family.

\section{Results}
\label{sec:results}

In this section we provide numerical results to support our findings from~\autoref{sec:method}.
Notably, the pulse shape $E$ is also considered a hyperparameter in \gls{qoc} and while usually a DRAG~\cite{motzoi_simple_2009} pulse is used in practice, we consider a Gaussian pulse described as $A\,\exp\!\left(-\nicefrac12\left(\frac{t-t_c}{\sigma}\right)^2\right)$, where $t_c$ is the central pulse time and $\bPi \coloneqq \{A, \sigma\}$, with $A$ and $\sigma$ being the pulse amplitude and pulse width, respectively.
While we do not constrain the range of values for the pulse parameters in the scope of this work, we acknowledge that for a physical implementation, the amplitude would require bounded optimisation $A < A_{\max}$ to prevent leakage.
Furthermore we choose a set of basis gates as $\RX$, $\RY$, $\RZ$ and $\CZ$.
We refer to Ref.~\cite{franz_software_2026} for more details regarding the pulse level simulation implementation.
All numerical simulations in this section represent closed-system dynamics without noise or decoherence, to isolate the theoretical impact of the optimisation landscape.
Finally, we apply the \gls{rwa} for all numerical experiments presented in this section; supplementary simulations utilizing the exact interaction-picture dynamics produced nearly identical results.

In the following experiments, we consider a model as defined in~\autoref{subsec:quantum-fourier-models} with $3$ qubits and a two trainable unitaries.
For the encoding we choose a ternary feature map~\cite{peters_generalization_2022} where $x$ is embedded through Pauli rotations defined as:
\begin{equation*}
    \hat{S}(x) = \bigotimes_{m=1}^n S(x 3^m).
\end{equation*}
where $n$ depicts the number of qubits.
This feature map is chosen as it implements an exponential feature map with a gap free spectrum.

\subsection{Training}
\label{subsec:training}

In this section we present the numerical results for the training of \glspl{qfm}.

We train a \gls{qfm} using the Adam optimiser (with analytical gradients) on the \gls{mse} loss against a Fourier series dataset. This dataset is generated using the exact same frequencies that the model theoretically represents, with coefficients sampled randomly within the unit disk.
Discrete datapoints are sampled according to the Nyquist sampling theorem~\cite{shannon_communication_1949}.
The results of this experiment are shown in~\autoref{fig:mse-over-circuits} for all of the \ase considered in this work (see Ref.~\cite{strobl_qml_2025} for implementation details), and sorted by their number of pulse parameters.
We show the \gls{mse} for both the purely gate-based training, the case where we enable access to the pulse parameters and a third scenario in which we decompose all gates into basis gates and add additional trainable parameters to each decomposed gate.
This decomposed circuit thus has almost the same number of trainable parameters as in case of the pure pulse simulation while still being fully gate based.
The standard deviation over ten seeds for a combined model and data initialisation is reported via error bars.
In addition to the \gls{mse}, we also report the distance between the rank of the coefficient Jacobian $\operatorname{rank}(J_{\theta}) - \operatorname{rank}(J_\text{ext})$ (\cf~\autoref{eq:extended-jacobian}) between the gate-based and pulse-based simulation.

\begin{figure}[htb]
    \includegraphics[width=\columnwidth]{figures/study-4-a.pdf}
    \caption{\gls{mse} for all \ase examined in this work, trained either only using unitary gate parameters (without and with a decomposition into basis gates), or in combination with pulse parameters. Results represent the average over ten different seeds used for both the model and data initialisation. Standard deviation is reported via error bars.}
    \label{fig:mse-over-circuits}
\end{figure}


The results confirm our theory from~\autoref{sec:method} that access to the pulse parameters results in a decrease of the \gls{mse} for all \ase.
Moreover, the results indicate a correlation between the number of pulse parameters and decrease of the \gls{mse}, thus confirm our expectation that composed gates contribute more significantly than basis gates.
The reported \gls{mse} of the decomposed circuit falls between the two other cases, but is for almost all circuits close to the pulse simulation, suggesting that over-parameterisation is a significant contributor, while still not being on-par with pulse simulation.
%
%
%
For the sake of completeness, we provide the full training history in~\refapx{app:additional-results}.

\subsection{Coefficient Variance}
\label{subsec:coefficient-variance}

In this section we present the numerical results for the activation of frequency components due to access to pulse parameters.
For this purpose, we instantiate a \gls{qfm} identical to the one we used in \autoref{subsec:training}.
To obtain the coefficients of such a circuit we sweep the input $x$ in a range from $0$ to $2\pi$ and then repeat this process for $4000$ different sets of gate parameters.
By applying a normal distribution on the scaler of pulse parameters with unit mean, we can observe the effect on coefficients when gradually increasing the variance $\sigma_{\blambda}$.
In~\autoref{fig:num-frequencies-over-distortion} we present the results of this experiment, by reporting the number of active frequencies.

\begin{figure}[htb]
    \includegraphics[width=\columnwidth]{figures/study-1-c.pdf}
    \caption{Count of active frequencies, calculated by the number of high-variance coefficients (see~\autoref{eq:highvar}) exceeding a threshold $\tau > 5 \cdot 10^{-6}$. Pulse parameter variance is indicated by color.}
    \label{fig:num-frequencies-over-distortion}
\end{figure}

As the variance of a particular iteration is depicted by color, we can note that different \ase (sorted by number of pulse parameters) are affected very differently by this distortion.
Note that non-integer number of frequencies appear due to the average over $10$ different seeds used for the initialisation.

While~\autoref{fig:num-frequencies-over-distortion} demonstrates that access to pulse parameters can increase the variance of coefficients and activates previously suppressed frequencies, \autoref{fig:expr-fcc-over-distortion} shows that the \gls{fcc} remains largely invariant. 
This apparent discrepancy is theoretically consistent as mentioned in~\autoref{subsec:effects-on-fcc-and-expressibility}: pulse parameters break exact algebraic cancellations within composite gates (\eg, imperfect cancellation of sub-rotations, \cf~\autoref{ex:crx-example}), thereby increasing the variance of specific Fourier coefficients. 
However, the global linear correlations captured by the \gls{fcc} are governed by the macroscopic circuit topology and the frequency generators $R(\omega)$. 
Because the Pearson correlation normalises variance, the underlying structural dependencies between frequencies remain intact. Consequently, pulse parameters successfully expand the local optimisation manifold (improving trainability) without altering the global structural correlations dictated by the encoding strategy.

\begin{remark}
    To build geometric intuition, one can imagine the coefficient manifold as a 3D object and the macroscopic structural correlations (\gls{fcc}) as its 2D shadow. 
    Introducing pulse parameters adds "texture" to the surface of the object, which increases the variance at the edges of the shadow. 
    
    However, because the macroscopic shape of the object (dictated by the circuit topology) remains intact, the statistical correlation between different regions of the shadow remains identical.
\end{remark}

\section{Conclusion \& discussion}
\label{sec:conclusion}

In this work we presented theoretical insights as to why pulse parameters break the rigid monomial structure of Fourier coefficients and can enlarge the local tangent space of the coefficient map; equivalently, it can increase the rank of the (real) coefficient Jacobian.
This makes an optimisation including the pulse parameters more favourable.
Subsequently we showed that these findings are also confirmed through numerical experiments where we trained a \gls{qfm} on a Fourier series with the same frequencies as the model.
Here we observe that access to the pulse parameters consistently improved the training results throughout all \ase investigated in this work.

While expanding the optimisation space to the pulse-level demonstrates clear theoretical and numerical advantages for trainability, applying these methods to real hardware presents specific trade-offs. 
On one hand, optimising pulse parameters provides the potential to natively mitigate hardware-specific errors, such as crosstalk or decoherence, by shaping pulses to avoid noise channels which is a capability gate-level compilation lacks. 
On the other hand, the number of additional parameters increases significantly depending on the specific gates used and their decomposition (\cf \autoref{tab:num-of-pp-per-gate}).
Thus, a further interesting avenue would be to explore the connection between the effects of noise on \gls{qfm}~\cite{fontana_spectral_2022,franz_out_2025} and the mitigation of these effects enabled by pulse-level optimisation.
As the complexity of training a neural network (and thus a \gls{qfm}) is linear in the number of parameters~\cite{heaton_ian_2018}, the additional complexity added by including pulse parameters is asymptotically negligible for complex models.
However, we acknowledge that the increase of parameters requires to consider trade-offs~\cite{thelen_predict_2025} for quantum computing in general and in the \gls{nisq} era, specifically as overhead required by obtaining gradients through the \gls{psr}~\cite{mitarai_quantum_2018,schuld_evaluating_2019,wierichs_general_2022,periyasamy_guided_2024} is significant. 

A common concern when introducing additional parameters to QML models is the exacerbation of \glspl{bp}, which typically onset when an \as becomes too expressive and approaches a $2$-design. 
However, as noted in~\autoref{subsec:effects-on-fcc-and-expressibility}, tuning pulse parameters does not fundamentally alter the global expressibility (strict bounds of the unitary volume) dictated by the circuit's topological structure and encoding strategy~\cite{gogeissl_quantum_2024}. 
Therefore, extending optimisation to the pulse level is not expected to induce \glspl{bp} any more than a gate-based architecture, while simultaneously smoothing the local landscape to prevent trapping.
Furthermore, while the asymptotic scaling of \glspl{bp} remains bound by the circuit architecture, the localised gradient magnitudes also depend on the derivative of the pulse-envelope function.
However, implementing pulse-level QML on physical QPUs requires extensive low-level access and relies on hardware-in-the-loop gradient estimation techniques (\eg, pulse-level \glspl{psr}), which incur a higher shot-cost overhead. 
Future work will focus on deploying these models on noisy quantum hardware to empirically evaluate the trade-off between enhanced trainability and operational overhead.
While the scaling of training with additional parameters is asymptotically linear, the additional overhead for calculating gradients especially via the \gls{psr} is non-trivial.
Furthermore, the \gls{psr} requires the generator to have two distinct eigenvalues which may not be trivially applicable to arbitrary pulse parameters~\cite{leng_differentiable_2022}. 
In addition to trainability challenges of pulse-level optimisation, we acknowledge that there are physical constraints on the parameters of the pulse (\ie increasing the amplitude $A$ too much can cause leakage) and various error channels in general that are simplified in our work.
While we're optimistic that these challenges can be overcome in practice, we leave an evaluation inspired by the theoretical and numerical findings shown in this manuscript to future work.

\section*{Acknowledgements}
We thank Gabriel Mejía Ruiz for valuable discussions and comments to this work.
MS, EK and AS acknowledge support by the state of Baden-W\"urttemberg through bwHPC.
LS acknowledges the support by the Doctoral School Karlsruhe School of Elementary and Astroparticle Physics: Science and Technology.
MF and WM acknowledge partial support by the German Federal Ministry of Research, Technology and Space (BMFTR), funding program ‘Research Program Quantum Systems’, grant number 13N17387, and by the German Research Foundation, grant MA 9739/1-1. WM acknowledges support by the High-Tech Agenda of the Free State of Bavaria.

\glsresetall
\appendices

\section{Basis Gate Sets}
\label{app:basis_gates}

In practice, quantum hardware only supports a limited set of natively executable operations, known as the basis gate set. 
Any quantum algorithm must be transpiled into a sequence of these basis gates. 
In our experiments, we consider the basis gate set $\{\text{CZ}, \text{RY}, \text{RZ}, \text{RX}\}$. 
This set satisfies the criteria of universal quantum computation, meaning any unitary operation can be approximated to arbitrary precision.
While~\autoref{tab:num-of-pp-per-gate} provides an overview of the parameters required for each of the gates utilised in the \ase of this work, we refer to \citetitle{franz_software_2026}~\cite{franz_software_2026} for more details on the implementation.

\begin{table}[htb]
    \centering
    \caption{Number of pulse parameters for each gate, including the time parameter. Leaf gates (i.e., hardware-native basis gates requiring no further decomposition) are marked with --.}
    \label{tab:num-of-pp-per-gate}
    \begin{tblr}{
            width=\columnwidth,
            colspec={X[c,1.5cm] X[c,1.5cm] X[c]},
            colsep=4pt,
            rowsep=1pt,
        }
        \toprule
        \textbf{Gate}      & \textbf{$\lvert \bPi \rvert$} & \textbf{Decomposition}                                 \\
        \midrule
        \RZ, \CZ       & 1                & --                                            \\
        \RX, \RY       & 3                & --                                            \\
        \midrule
        \hadamard & 4                & $\RZ\cdot\RY$                                 \\
        \Rot      & 5                & $\RZ\cdot\RY\cdot\RZ$                         \\
        \CX       & 9                & $\hadamard\cdot\CZ\cdot\hadamard$             \\
        \CY       & 11               & $\RZ\cdot\CX\cdot\RZ$                         \\
        \CRZ      & 20               & $\RZ\cdot\CX\cdot\RZ\cdot\CX$                 \\
        \CRY      & 24               & $\RY\cdot\CX\cdot\RY\cdot\CX$                 \\
        \CRX      & 26               & $\RZ\cdot\RY\cdot\CX\cdot\RY\cdot\CX\cdot\RZ$ \\
        \bottomrule
    \end{tblr}
\end{table}

\section{Expressibility Metric}
\label{app:expressibility-metric}

The expressibility is evaluated as the \gls{kl}-divergence between the fidelity distributions generated by sampling the Haar integral $\int_{\text {Haar }}\text{d} \psi (\ket{\psi}\bra{\psi})^{\otimes t} $ of a state $t$-design, a given model $\int_{\btheta}\text{d}\btheta \left(\ket{\psi_{\btheta}}\bra{\psi_{\btheta}}\right)^{\otimes t}$ respectively~\cite{sim_expressibility_2019,kullback_information_1951}:
\begin{equation}
    D_{\mathrm{KL}}\left(P_{\text{Model}}(F, \btheta) \| P_{\text {Haar }}(F)\right).
    \label{eq:kl_divergence}
\end{equation}
Here $F \coloneqq \lvert\braket{\psi_1}{\psi_2}\rvert^2$ is the fidelity defined as the squared overlap of a pair of sampled states $(\ket{\psi_1},\ket{\psi_2})$, and $P(F)$ denotes the resulting distribution of fidelities.
When the two distributions coincide, $P_{\text{Model}}(F,\btheta)=P_{\text{Haar}}(F)$, the metric takes the value zero, indicating that the states produced by the \gls{qfm} are exactly Haar-distributed.

\section{Fourier Coefficient Calculation}
\label{app:fourier-coefficient-calculation}

We utilise the definition from Ref.~\cite{strobl_fourier_2025} for calculating the \gls{fcc} as the average over the correlations of coefficients $r_\Theta(\bomega, \bomega')$ between all frequency pairs parametrised by $\btheta \in \Theta$
\begin{equation}
    \fcc \coloneq \frac{1}{\vert \bOmega \vert} \sum_{\bomega, \bomega' \in \bOmega} \left\vert r_\Theta(\bomega, \bomega') \right\vert.
    \label{eq:fcc}
\end{equation}
obtained through the Pearson correlation coefficient~\cite{freedman_statistics_2007} defined as
\begin{equation}
    r_{\Theta}(\bomega, \bomega')= \frac{\sum_{\btheta\in\Theta}\left(c_{\bomega}(\btheta)-\bar c_{\bomega}\right)\left(c_{\bomega'}(\btheta)-\bar c_{\bomega'}\right)}{\sqrt{\sum_{\btheta\in\Theta}\left(c_{\bomega}(\btheta)-\bar c_{\bomega}\right)^{2} \sum_{\btheta\in\Theta}\left(c_{\bomega'}(\btheta)-\bar c_{\bomega'}\right)^{2}}}
    \label{eq:pearson_correlation}
\end{equation}
In the expression above, $\bar c_{\bomega}$ and $\bar c_{\bomega'}$ are the mean values of the coefficients at frequencies $\bomega$ and $\bomega'$, respectively.
The Pearson coefficient is normalised by the standard deviation of the coefficients at each frequency.
When the average correlation across frequency pairs is low, the metric approaches zero; it approaches one for highly correlated \ase.

\section{Fidelity and Trace Distance}
\label{app:fidelity-and-trace-distance}

This section highlights the sensitivity of the fidelity and trace distance to the pulse parameter variance.
Specifically, we evaluate these two metrics with reference to states produced by gate-level simulation of equivalent circuits.
The pulse parameters are then distorted by a scaler which is sampled from a Gaussian distribution of mean $1$ and the variance $\sigma^2_{\blambda}$ being increased incrementally.
\autoref{fig:study-2} shows the result of this evaluation of a pulse parameter variance over $8$ steps within a range from $0$ to $0.008$.
It becomes clear that even slight deviations in $\blambda$ cause the resulting states to diverge rapidly from the target states which is reflected by the fidelity and trace distance going down and up respectively.

\begin{figure}[htb]
    \includegraphics[width=\columnwidth]{figures/study-2.pdf}
    \caption{Fidelity between gate- and pulse-level simulations for different levels of distortion $\sigma^2_{\blambda}$, coloured by the number of pulse parameters $\vert \bPi\vert$.}
    \label{fig:study-2}
\end{figure}

\section{Additional Results}
\label{app:additional-results}

\begin{figure}[htb]
    \includegraphics[width=\columnwidth]{figures/study-4-d.pdf}
    \caption{MSE over the course of training for different \ase and using either standard or decomposed unitary gates, or pulse gates. The inset shows a larger view of the first 100 steps.}
    \label{fig:mse-over-training}
\end{figure}

In a similar manner as described in~\autoref{subsec:coefficient-variance}, we evaluated the \gls{fcc} and the expressibility metric.
Here we've found that both metrics do not change significantly \wrt the pulse parameter variance, as shown in~\autoref{fig:expr-fcc-over-distortion} for the \gls{fcc} and the expressibility metric, averaged over ten seeds.

\begin{figure}[htb]
    \includegraphics[width=\columnwidth]{figures/study-3.pdf}
    \caption{\gls{fcc} and \gls{kl}-divergence $D_\text{KL}$ (\autoref{eq:kl_divergence}, \ie, inverse of the expressibility) over pulse parameter variance $\sigma^2_{\blambda}$ (coloured) for different \ase.}
    \label{fig:expr-fcc-over-distortion}
\end{figure}

\printbibliography

\end{document}